\documentclass[11pt]{amsart}
\usepackage{times,amssymb,amsmath,epsfig,nicefrac,euscript,cite,mathrsfs,mathrsfs,amsaddr}

\oddsidemargin 0.5in 
\evensidemargin 0.5in 
\textwidth 5.5in 
\topmargin 0.0in 
\textheight 8.5in 
\newcommand\nc\newcommand
\nc\bfa{{\boldsymbol a}}\nc\bfA{{\boldsymbol A}}\nc\cA{{\mathcal A}}
\nc\bfb{{\boldsymbol b}}\nc\bfB{{\boldsymbol B}}\nc\cB{{\mathcal B}}
\nc\bfc{{\boldsymbol c}}\nc\bfC{{\boldsymbol C}}\nc\cC{{\mathcal C}}
\nc\sC{{\mathscr C}}
\nc\bfd{{\boldsymbol d}}\nc\bfD{{\boldsymbol D}}\nc\cD{{\mathcal D}}
\nc\bfe{{\boldsymbol e}}\nc\bfE{{\boldsymbol E}}\nc\cE{{\mathcal E}}
\nc\bff{{\boldsymbol f}}\nc\bfF{{\boldsymbol F}}\nc\cF{{\mathcal F}}
\nc\bfg{{\boldsymbol g}}\nc\bfG{{\boldsymbol G}}\nc\cG{{\mathcal G}}
\nc\bfh{{\boldsymbol h}}\nc\bfH{{\boldsymbol H}}\nc\cH{{\mathcal H}}
\nc\bfi{{\boldsymbol i}}\nc\bfI{{\boldsymbol I}}\nc\cI{{\mathcal I}}
\nc\bfj{{\boldsymbol j}}\nc\bfJ{{\boldsymbol J}}\nc\cJ{{\mathcal J}}
\nc\bfk{{\boldsymbol k}}\nc\bfK{{\boldsymbol K}}\nc\cK{{\mathcal K}}
\nc\bfl{{\boldsymbol l}}\nc\bfL{{\boldsymbol L}}\nc\cL{{\mathcal L}}
\nc\bfm{{\boldsymbol m}}\nc\bfM{{\boldsymbol M}}\nc\sM{{\mathscr M}}
\nc\bfn{{\boldsymbol n}}\nc\bfN{{\boldsymbol N}}\nc\cN{{\mathcal N}}
\nc\bfo{{\boldsymbol o}}\nc\bfO{{\boldsymbol O}}\nc\cO{{\mathcal O}}
\nc\bfp{{\boldsymbol p}}\nc\bfP{{\boldsymbol P}}\nc\cP{{\mathcal P}}
\nc\bfq{{\boldsymbol q}}\nc\bfQ{{\boldsymbol Q}}\nc\cQ{{\mathcal Q}}
\nc\bfr{{\boldsymbol r}}\nc\bfR{{\boldsymbol R}}\nc\cR{{\mathcal R}}
\nc\bfs{{\boldsymbol s}}\nc\bfS{{\boldsymbol S}}\nc\cS{{\mathcal S}}
\nc\bft{{\boldsymbol t}}\nc\bfT{{\boldsymbol T}}\nc\cT{{\mathcal T}}
\nc\bfu{{\boldsymbol u}}\nc\bfU{{\boldsymbol U}}\nc\cU{{\mathcal U}}
\nc\bfv{{\boldsymbol v}}\nc\bfV{{\boldsymbol V}}\nc\cV{{\mathcal V}}
\nc\bfw{{\boldsymbol w}}\nc\bfW{{\boldsymbol W}}\nc\cW{{\mathcal W}}
\nc\bfx{{\boldsymbol x}}\nc\bfX{{\boldsymbol X}}\nc\cX{{\mathcal X}}
\nc\bfy{{\boldsymbol y}}\nc\bfY{{\boldsymbol Y}}\nc\cY{{\mathcal Y}}
\nc\bfz{{\boldsymbol z}}\nc\bfZ{{\boldsymbol Z}}\nc\cZ{{\mathcal Z}}

\def\supp{\qopname\relax{no}{supp}}

\def\h_q{\qopname\relax{no}{h_q}}
\def\dist{d_H}
\def\avg{{\mathbb E}}

\newtheorem{theorem}{Theorem}
\newtheorem{definition}{Definition}
\newtheorem{lemma}[theorem]{Lemma}
\newtheorem{proposition}[theorem]{Proposition}


\newcommand{\Namea}{Arya Mazumdar}
\newcommand{\Addra}{Department of EECS / Research Laboratory of Electronics\\
Massachusetts Institute of Technology\\
Cambridge, MA 02139\\ email: aryam@mit.edu}
\allowdisplaybreaks

\title[Construction of Group Testing Scheme]{Construction of Almost Disjunct Matrices for Group Testing}
\author[A. Mazumdar]{\textsc{\Namea}}
\address{\textsc{\Addra}}
\begin{document}
\thanks{This work was supported in part by the US Air Force Office of Scientific Research under Grant No.  FA9550-11-1-0183, and by the National Science Foundation under Grant No.  CCF-1017772.}
\maketitle
\begin{abstract}
In a \emph{group testing} scheme, a set of tests is designed
to identify a small number $t$ of defective items among a large set (of size $N$) of items.
In the non-adaptive scenario the set of tests has to be designed
in one-shot. In this setting, designing a testing scheme is equivalent
to the construction of  a \emph{disjunct matrix}, an $M \times N$
matrix where the union of supports of any $t$ columns does not contain the support of any other column.
In principle, one wants to have such a matrix with minimum possible number $M$ of rows (tests).
One of the main ways of constructing disjunct matrices relies on \emph{constant weight
error-correcting codes} and their \emph{minimum distance}.
In this paper, we consider a relaxed definition of a disjunct matrix known as \emph{almost disjunct matrix}.
This concept is also studied under the name of \emph{weakly separated design} in the literature. 
 The relaxed definition
allows one to come up with group testing schemes where a close-to-one fraction of all
possible sets of defective items are identifiable. 
 Our main contribution is twofold. First, we go beyond the minimum distance analysis and 
 connect the \emph{average distance} of a constant
 weight code to the parameters of an almost disjunct matrix constructed from it.  
Our second contribution is
to  explicitly construct almost disjunct matrices based
on our average distance analysis,
that have much smaller number of rows than any previous explicit construction of disjunct matrices.
The parameters of our construction can be varied to cover a large range of relations for $t$ and $N$.
As an example of parameters, consider any absolute constant $\epsilon >0$ and 
$t$ proportional to $N^\delta, \delta >0$. With our method
it is possible to explicitly construct a group testing scheme
that identifies $(1-\epsilon)$ proportion of all possible defective  sets of size $t$ using only
$O\Big(t^{3/2}\sqrt{ \log(N/\epsilon)}\Big)$ tests. 
On the other hand, to form an explicit non-adaptive group testing scheme that works for
all possible defective sets of size $t$,
one requires $O(t^2 \log N)$ tests.
  \end{abstract}
  \newpage
\section{Introduction}
Combinatorial  group testing is an old and well-studied problem.
In the most general form it is assumed that there is a set of $N$ elements
among which at most $t$ are {\em defective}, i.e., special.
This set of defective items is called the {\em defective set} or {\em configuration.}
To find the defective set, one might
test all the elements individually  for {\em defects}, requiring $N$ tests. Intuitively, that would be a waste of resource if
$t \ll N$. On the other hand, to identify the defective configuration it is required to ask at least
$\log \sum_{i=0}^t \binom{N}{i} \approx t \log \frac{N}{t}$  yes-no questions. The main objective is to identify the defective
configuration with a number of tests that is as close to this minimum as possible.

In the group testing problem, a {\em group} of elements are tested together  and
if this particular group contains any defective
element the test result is positive. Based on the test results of this kind one {\em identifies} (with an efficient
algorithm) the  defective set with minimum possible number of tests.
The schemes (grouping of elements) can be adaptive, where
the design of one test may depend on the results of preceding tests. For a comprehensive survey
of adaptive group testing schemes we refer the reader to \cite{DH2000}.

In this paper we are interested in non-adaptive group testing schemes: here
all the tests are designed together. If the number of designed tests is $M$, then a
non-adaptive group testing scheme is equivalent to the design of a so-called binary {\em test matrix}
of size $M \times N$ where the $(i,j)$th entry is $1$ if the $i$th test includes the $j$th element;
it is $0$ otherwise. As the test results,  we see  the Boolean OR of the
columns corresponding to the defective entries.

Extensive research has been performed to find out the minimum
number of required tests $M$ in terms of the number of elements $N$ and the maximum number of
defective elements $t$. The best known lower bound says that it is necessary
to have $M = O(\frac{t^2}{\log{t}} \log N)$ tests \cite{DR1982, DRR1989}.
The existence of non-adaptive group testing
 schemes with $M = O(t^2 \log N)$ is also known for quite some time \cite{DH2000, HS1987}.
 
 Evidently, there is a gap by the factor of  $O(\log t)$ in these upper and lower bounds.
It is generally believed that it is hard to close the gap. On the other hand, for the adaptive
setting, schemes have been constructed  with as small as $O(t\log n)$ tests,  optimal
up to a constant factor \cite{DH2000, H1972}.

A construction of group testing schemes from error-correcting code matrices and using code concatenation
appeared in the seminal paper by Kautz and Singleton \cite{KS1964}.
Code concatenation is a way to construct binary codes from codes over a larger alphabet \cite{MS1977}.
In \cite{KS1964}, the authors
concatenate a $q$-ary ($q>2$) Reed-Solomon code with a unit weight code to use the resulting codewords as the columns of the testing matrix.
Recently in \cite{PR2008}, an explicit construction of a scheme with $M = O(t^2 \log N)$ tests
is provided. The construction of \cite{PR2008} is based on the idea of \cite{KS1964}: instead of
the Reed-Solomon code, they take a low-rate code that achieves the Gilbert-Varshamov bound
of coding theory \cite{MS1977, R2006}. Papers, such as \cite{DRM2000,Y1998}, also consider construction
of non-adaptive group testing schemes.

In this paper we  explicitly construct a non-adaptive scheme that requires a number of
test proportional to $t^{3/2}$.
However, we needed to relax the requirement of identifications of defective elements in a way
that makes it amenable for our analysis. 
This relaxed requirement schemes were considered under the name of {\em weakly separated designs} in
\cite{M1978} and \cite{Z2003}. Our definition of this relaxation appeared previously in the paper \cite{MRY2004}.
We (and \cite{M1978,Z2003,MRY2004}) aim for a scheme that successfully identifies a large fraction of 
all possible defective configurations.
Non-adaptive group testing has found applications
in multiple different areas, such as, multi-user communication \cite{BMTW1984, W1985}, DNA screening \cite{ND2000},
pattern finding \cite{MP2004} etc.
It can be observed that in many of these applications it would have been still useful
to have a scheme that identifies almost all different defective configurations if not
all possible defective configurations. 
It is known (see, \cite{Z2003}) that with this relaxation
it might be possible to reduce the number of tests to be proportional to $t\log N$. However
this result is not constructive.
The above relaxation and weakly separated designs form a parallel of similar works in 
 compressive sensing (see, \cite{CHJ2010,MB2011}) where recovery of almost all sparse signals from a
 generic random model is considered.
In the literature, other relaxed versions of the group testing problem have been studied as well.
For example, in \cite{GIS2008} it is assumed that recovering a large fraction of
defective elements is sufficient. There is also effort to form an information-theoretic
model for the group testing problem where test results can be noisy \cite{AS2010}.
In other versions of the group testing problem, a test may carry more than
one bit of information \cite{H1984, BKS1971}, or the test results are threshold-based (see \cite{C2010}
and references therein). Algorithmic aspects of the recovery schemes have
been studied in several papers. For example, papers \cite{INR2010} and \cite{NPR2011} provide
very efficient recovery algorithms for non-adaptive group testing.
\subsection{Results}
The constructions of \cite{KS1964, PR2008} and many others are based on  so-called
{\em constant weight error-correcting codes}, a set of binary vectors of same Hamming weight (number of ones).
The group-testing recovery property relies on the pairwise {\em minimum distance}
between the vectors of the code \cite{KS1964}. In this work,  we go beyond this
minimum distance analysis and relate the group-testing parameters to the
{\em average distance} of the constant weight code. This allows us to connect
weakly separated designs to error-correcting codes in a general way. Previously the connection 
between distances of the code and weakly separated designs
 was only known for the very specific family of {\em maximum distance
separable} codes \cite{MRY2004}, where much more information than the
average distance is evident.

Based on the newfound connection, we construct an explicit (constructible deterministically in polynomial time)
  scheme of non-adaptive group testing that can
identify all except an $\epsilon>0$ fraction of all defective sets of
size at most $t.$ To be specific,
we show that it is possible to explicitly construct a group testing scheme
that identifies $(1-\epsilon)$ proportion of all possible defective  sets of size $t$ using only
$8e t^{3/2} \log N \frac{\sqrt{\log\frac{2(N-t)}{\epsilon}}}{\log t- \log \log \frac{2(N-t)}{\epsilon}}$ tests for any $\epsilon >2(N-t)e^{-t}.$
It can  be seen that, with the relaxation in requirement, the  number of tests is
brought down to be proportional to $t^{3/2}$ from $t^2.$
This  allows us to
operate with a number of tests that was previously not possible in explicit constructions of non-adaptive group testing.
For a large range of values of $t$, namely $t$ being proportional to any positive power of $N$, i.e., $t\sim N^\delta$,
 and constant $\epsilon$
our scheme has number of tests only  about $\frac{8e}{\delta}t^{3/2} \sqrt{\log (N/\epsilon)})$.
 Our construction technique
is same as the scheme of \cite{KS1964,PR2008}, however with a finer analysis
relying on the distance properties of a linear code
we are able to achieve more.

In Section \ref{sec:disjunct}, we provide the necessary definitions and state one of the main results:
we state the connection between the parameters of a weakly separated design and the average distance of a 
constant weight code.
In Section \ref{sec:construction}
we discuss our  construction scheme. The proofs of our claims
can be found in Sections \ref{sec:proof1} and \ref{sec:construction}.

\section{Disjunct Matrices}\label{sec:disjunct}
\subsection{Lower bounds}
It is easy to see that, if an $M \times N$ binary  matrix gives a non-adaptive group testing scheme
that identify up to $t$ defective elements, then,
$\sum_{i=0}^t \binom{N}{i} \le 2^M.$
This means that for any group testing scheme,
\begin{equation}
\label{eq:bound1}
M \ge \log \sum_{i=0}^t \binom{N}{i}  \ge t \log\frac{N}{t}.
\end{equation}
Consider the case when one is interested in a scheme that identifies all possible except an $\epsilon$ fraction of
the different defective sets. Then it is required that,
\begin{equation}
\label{eq:bound2}
M \ge \log \Big((1-\epsilon) \binom{N}{t} \Big) \ge t \log\frac{N}{t}+\log (1-\epsilon).
\end{equation}
 Although
\eqref{eq:bound1} is proven to be a loose bound, it is shown in \cite{Z2003, M1978} that  \eqref{eq:bound2} is tight.

\subsection{Disjunct matrices}
The {\em support} of a vector $\bfx$ is the set of coordinates  where the vector has nonzero
entries. It is denoted by $\supp(\bfx)$. We use the usual set terminology, where a set $A$ contains $B$ if $B\subseteq A.$

\begin{definition}
An $M \times N$ binary matrix $A$ is called $t$-disjunct if the support of any column
is not contained in
the union of the supports of any other  $t$ columns.
\end{definition}
It is not very difficult to see that a $t$-disjunct matrix gives a group testing scheme
that identifies any defective set up to size $t$. On the other hand
any group testing scheme
that identifies any defective set up to size $t$ must be a $(t-1)$-disjunct
matrix \cite{DH2000}. To a great advantage,  disjunct matrices allow for a simple identification
algorithm that runs in time $O(Nt).$
Below we define  {\em relaxed} disjunct matrices. This definition appeared very closely in \cite{M1978,Z2003} 
and independently exactly in
\cite{MRY2004}.
\begin{definition}
For any $\epsilon >0$, an $M \times N$ matrix $A$ is called type-1 $(t,\epsilon)$-disjunct if  the set of  $t$-tuple of
columns (of size $\binom{N}t$) has a subset $\cB$ of size at least $(1-\epsilon)\binom{N}t$ with the following property:
for all $J \in \cB$, $\cup_{\kappa\in J}\supp(\kappa)$ does not contain support of any  column $\nu \notin J.$
\end{definition}
In other words, the union of supports of a
randomly and uniformly chosen set of  $t$ columns from a type-1 $(t,\epsilon)$-disjunct matrix does not contain the support of any
other column with probability at least $1-\epsilon$.
It is easy to see the following fact.
\begin{proposition}\label{prop:scheme}
A type-1 $(t,\epsilon)$-disjunct matrix gives a group testing scheme that can identify
all but at most a fraction $\epsilon >0$ of all possible defective configurations of size at most $t$.
\end{proposition}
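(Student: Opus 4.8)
The plan is to unwind the two definitions and observe that the claim is essentially a reformulation of what type-1 $(t,\epsilon)$-disjunctness already says, once we interpret columns as items and supports as the set of tests each item participates in. First I would fix notation: identify the $N$ columns of the matrix $A$ with the $N$ items, and for a defective configuration $J$ (a set of at most $t$ columns) recall that the test outcomes we observe are the Boolean OR of the columns in $J$, whose support is exactly $\bigcup_{\kappa \in J}\supp(\kappa)$. The key elementary fact, which I would state explicitly, is that the outcome vector of a configuration $J$ uniquely determines $J$ among all configurations of size at most $t$ precisely when no column outside $J$ has its support swallowed by $\bigcup_{\kappa \in J}\supp(\kappa)$: if some external column $\nu$ satisfied $\supp(\nu)\subseteq \bigcup_{\kappa\in J}\supp(\kappa)$, then adding $\nu$ to $J$ (or swapping it in) would produce an indistinguishable outcome, so identification fails for that configuration.

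Next I would invoke the definition of type-1 $(t,\epsilon)$-disjunctness directly: by hypothesis there is a collection $\cB$ of at least $(1-\epsilon)\binom{N}{t}$ of the $t$-column tuples such that for every $J \in \cB$, the union $\bigcup_{\kappa\in J}\supp(\kappa)$ contains the support of no column $\nu \notin J$. By the elementary fact above, every configuration in $\cB$ is uniquely identifiable from its Boolean-OR outcome. Since $\cB$ has size at least $(1-\epsilon)\binom{N}{t}$, the set of non-identifiable configurations of size exactly $t$ has size at most $\epsilon\binom{N}{t}$, i.e.\ at most an $\epsilon$ fraction. The decoding procedure itself is the standard disjunct-matrix decoder: declare item $j$ defective iff every test containing $j$ returned positive; for configurations in $\cB$ this rule returns exactly $J$, because each genuine defective lies only in positive tests, and each external column has at least one test outside the observed positive support.

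The one point requiring a little care, rather than a true obstacle, is the bookkeeping between ``configurations of size \emph{at most} $t$'' in Proposition~\ref{prop:scheme} and ``$t$-tuples of columns'' in the definition of type-1 $(t,\epsilon)$-disjunctness. I would handle this by noting that any configuration of size $s < t$ is contained in at least one size-$t$ superset, and that the relevant guarantee is monotone: if $\bigcup_{\kappa\in J}\supp(\kappa)$ for a size-$t$ set $J$ in $\cB$ contains no outside support, then the same holds for every subset of $J$, so sub-$t$ configurations drawn from identifiable $t$-sets inherit identifiability. Thus the failure fraction among all configurations of size at most $t$ is still controlled by $\epsilon$. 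This monotonicity step, together with the uniqueness observation, is the heart of the argument; everything else is definitional unwinding, so I expect no substantial difficulty and the proof to be short.
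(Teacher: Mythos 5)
The paper offers no proof of Proposition~\ref{prop:scheme} at all (it is stated as immediate), so your attempt stands on its own; and while your argument for configurations of size exactly $t$ is correct, the step you yourself call the heart of the argument --- the monotonicity claim --- is false, and this is a genuine gap. Type-1 $(t,\epsilon)$-disjunctness constrains, for $J\in\cB$, only columns \emph{outside} $J$: no $\nu\notin J$ may have $\supp(\nu)\subseteq\bigcup_{\kappa\in J}\supp(\kappa)$. It says nothing about a column of $J$ being covered by the \emph{other} columns of $J$. Hence for $J'\subsetneq J$ and $\nu\in J\setminus J'$, nothing prevents $\supp(\nu)\subseteq\bigcup_{\kappa\in J'}\supp(\kappa)$. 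Concretely, take $t=3$ and weight-$2$ columns with supports $\{1,2\}$, $\{3,4\}$, $\{1,3\}$: this triple $J$ may perfectly well lie in $\cB$ (membership constrains only the remaining columns of the matrix), yet the sub-configuration $J'$ consisting of the first two columns produces the same outcome $\{1,2,3,4\}$ as $J$, so $J'$ is misdecoded --- your decoder, run on the outcome of $J'$, outputs all three columns. The same oversight makes your ``precisely when'' fail as an equivalence: even for $J\in\cB$, a proper subset of $J$ can generate the same outcome, so the outcome does not determine $J$ uniquely among all configurations of size at most $t$; what your decoder argument actually establishes (and what suffices for sets in $\cB$) is that the maximal consistent set equals $J$.

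The repair requires a different idea than inheritance from a single superset. Fix $J'$ with $|J'|=s<t$ and $\nu\notin J'$, and choose a size-$t$ superset $J\supseteq J'$ with $\nu\notin J$ (possible since $N\ge t+1$); if $J\in\cB$, then $\supp(\nu)\not\subseteq\bigcup_{\kappa\in J}\supp(\kappa)\supseteq\bigcup_{\kappa\in J'}\supp(\kappa)$, so $\nu$ does no harm. Thus $J'$ can fail only if for some $\nu$ \emph{every} one of the $\binom{N-s-1}{t-s}$ size-$t$ supersets of $J'$ avoiding $\nu$ lies outside $\cB$. Since each size-$t$ set outside $\cB$ contains only $\binom{t}{s}$ subsets of size $s$, double counting bounds the number of bad size-$s$ configurations by $\epsilon\binom{N}{t}\binom{t}{s}\big/\binom{N-s-1}{t-s}=\epsilon\binom{N}{s}\frac{N-s}{N-t}$, i.e.\ a fraction at most $\epsilon\frac{N-s}{N-t}$ of all size-$s$ sets --- the proposition with $\epsilon$ degraded by a factor $\frac{N}{N-t}$, which is negligible when $t=o(N)$ but is not the literal statement. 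Alternatively, one reads the proposition as the paper's surrounding text and abstract do, namely as a statement about defective sets of size exactly $t$, in which case your first two paragraphs already constitute a complete proof and the problematic third paragraph can simply be deleted.
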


The definition of disjunct matrix can be restated as follows: a matrix is $t$-disjunct if
any  $t+1$ columns  indexed by $i_1,\dots,i_{t+1}$ of the
matrix form a sub matrix which must have a
row that has exactly one $1$ in the  $i_j$th position and zeros in the
other positions, for $j =1, \dots, t+1.$ Recall that,
a \emph{permutation matrix} is a square binary $\{0,1\}$-matrix with exactly
one $1$ in each row and each column. Hence,
for a $t$-disjunct matrix, any $t+1$ columns form a sub-matrix that must
contain $t+1$ rows such that a $(t+1) \times (t+1)$
permutation matrix is formed of these rows and columns.
A statistical relaxation of  the above definition gives the following.
\begin{definition}
For any $\epsilon >0$, an $M \times N$ matrix $A$ is called type-2 $(t,\epsilon)$-disjunct if
the set of $(t+1)$-tuples of columns (of size $\binom{N}{t+1}$) has a subset
$\cB$ of size at least $(1-\epsilon)\binom{N}{t+1}$ with the following property: the $M\times(t+1)$
matrix formed by any element $J\in\cB$ must contain $t+1$ rows that form a
$(t+1)\times(t+1)$ permutation matrix.
\end{definition}
 In other words, with probability at least $1-\epsilon$, any
randomly and uniformly chosen $t+1$ columns from a type-2 $(t,\epsilon)$-disjunct
 matrix form a sub-matrix that must has $t+1$ rows such that a $(t+1) \times (t+1)$
permutation matrix can be formed.
It is clear that for $\epsilon =0$, the type-1 and type-2 $(t,\epsilon)$-disjunct matrices
are same (i.e., $t$-disjunct).
In the rest of the paper, we concentrate on the   design of  an $M \times N$ matrix $A$ that is
type-2 $(t,\epsilon)$-disjunct. Our technique can be easily extended to the construction of
type-1 disjunct matrices.

\subsection{Constant weight codes and disjunct matrices}
A binary $(M,N,d)$ code $\cC$ is a set of size $N$ consisting of $\{0,1\}$-vectors of length $M$.
 Here $d$ is the largest integer such that
any two vectors (codewords) of $\cC$ are at least
Hamming distance $d$ apart. $d$ is called the {\em minimum distance} (or {\em distance})
of $\cC.$ If all the codewords of $\cC$ have Hamming weight $w$,
then it is called a constant weight code. In that case we write $\cC$ is an
$(M,N,d,w)$ constant weight binary code.

Constant weight codes can give constructions of group testing schemes.
One just arranges the codewords as the columns of the test matrix. Kautz and Singleton
proved the following in \cite{KS1964}.
\begin{proposition}\label{prop:disj}
An $(M,N,d,w)$ constant weight binary code provides
a $t$-disjunct matrix where,
$
t = \Big\lfloor\frac{w-1}{w-d/2}\Big\rfloor.
$
\end{proposition}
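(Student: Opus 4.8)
The plan is to show that if a constant weight code has large enough minimum distance relative to its weight, then any column's support cannot be covered by the union of supports of $t$ other columns. I would work directly with the disjunctness condition: fix a column $\nu$ (a codeword of weight $w$) and any $t$ other columns $\kappa_1,\dots,\kappa_t$. I want to guarantee that $\supp(\nu) \not\subseteq \bigcup_{i=1}^t \supp(\kappa_i)$, which means at least one of the $w$ coordinates in $\supp(\nu)$ is avoided by all the $\kappa_i$.

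First I would translate the minimum distance $d$ into an overlap bound between two codewords. Since each codeword has weight exactly $w$ and any two are at Hamming distance at least $d$, for distinct codewords $\nu$ and $\kappa_i$ we have $|\supp(\nu)\cap\supp(\kappa_i)| \le w - d/2$. This follows because for two weight-$w$ vectors, $\dist(\nu,\kappa_i) = 2(w - |\supp(\nu)\cap\supp(\kappa_i)|)$, so $d \le 2(w - |\supp(\nu)\cap\supp(\kappa_i)|)$ rearranges to the stated overlap bound. This is the key quantitative input.

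Next I would bound the size of the covered portion of $\supp(\nu)$ by a union bound: the number of coordinates of $\supp(\nu)$ that lie in $\bigcup_{i=1}^t \supp(\kappa_i)$ is at most $\sum_{i=1}^t |\supp(\nu)\cap\supp(\kappa_i)| \le t(w - d/2)$. If this quantity is strictly less than $w$, then $\supp(\nu)$ has at least one coordinate outside the union, so containment fails and the disjunctness property holds. Thus the condition I need is $t(w - d/2) < w$, i.e. $t < \frac{w}{w - d/2}$. Since $t$ is an integer, the largest admissible value is $t = \bigl\lfloor \frac{w-1}{w-d/2}\bigr\rfloor$, matching the floor expression in the statement (using that $t \le \frac{w-1}{w-d/2}$ suffices to ensure $t(w-d/2) \le w-1 < w$).

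The main subtlety, rather than an obstacle, is the passage from the strict inequality $t(w - d/2) < w$ to the floor formula with numerator $w-1$; I would verify carefully that $\bigl\lfloor\frac{w-1}{w-d/2}\bigr\rfloor$ is exactly the largest integer $t$ satisfying $t(w-d/2) \le w-1$, which guarantees the uncovered coordinate exists. Everything else is a direct union bound, so no heavy machinery is required; the argument is essentially the observation that $t$ codewords can each eat away at most $w - d/2$ positions of the support of a fixed codeword, and $t$ such bites leave something behind precisely when $t$ is below the stated threshold.
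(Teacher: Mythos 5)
Your proof is correct and follows essentially the same argument as the paper: bound each pairwise overlap by $w-d/2$ using the constant-weight distance identity, apply a union bound to get at most $t(w-d/2)$ covered coordinates, and conclude disjunctness when this is less than $w$. The paper's proof is just a terser version of this, and your extra care with the floor expression (checking $t(w-d/2)\le w-1 < w$) is a valid filling-in of the detail the paper leaves implicit.
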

\begin{proof}
The intersection of supports of any two columns has size at most $w -d/2$. Hence if
$w> t(w-d/2)$, support of any column will not be contained in the union of supports of any $t$ other columns.
\end{proof}

\subsection{$(t,\epsilon)$-disjunct matrices from constant weight codes}
We extend  Prop.~\ref{prop:disj} to have one of our main theorems. However, to do that we
need to define the {\em average distance} $D$ of a code $\cC$:
$$
D(\cC) =  \frac{1}{|\cC|-1} \min_{\bfx \in \cC}\sum_{\bfy \in \cC\setminus \{\bfx\}} \dist(\bfx,\bfy).
$$
Here $\dist(\bfx,\bfy)$ denotes the Hamming distance between $\bfx$ and $\bfy$.

\begin{theorem}\label{thm:main1}
Suppose, we have a constant weight binary code $\cC$ of size $N$, minimum distance $d$ and average distance $D$ such that every codeword has length $M$ and weight $w$. The test matrix obtained from the
code is type-2 $(t,\epsilon)$-disjunct for the largest $t$ such that,
$$
\alpha\sqrt{t\ln \frac{2(t+1)}{\epsilon}} \le \frac{w-1-t(w-D/2)}{w-d/2}
$$
holds. Here $\alpha$ is any absolute constant greater than or equal to $\sqrt{2}(1+t/(N-1))$.
\end{theorem}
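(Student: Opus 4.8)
The plan is to reduce the type-2 condition to a statement about a single random column being covered, and then control that event with a concentration inequality driven by the average distance.

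First I would reformulate the permutation-matrix requirement columnwise. A set $J$ of $t+1$ columns fails to contain a $(t+1)\times(t+1)$ permutation matrix exactly when some column $\nu\in J$ has no \emph{private} coordinate, i.e.\ $\supp(\nu)\subseteq\bigcup_{\kappa\in J\setminus\{\nu\}}\supp(\kappa)$: if every column has a private coordinate, those $t+1$ coordinates are distinct and index the desired permutation submatrix, and conversely. Hence the number of \emph{bad} $(t+1)$-tuples is at most $\sum_\nu\#\{S:|S|=t,\ \nu\notin S,\ \supp(\nu)\subseteq\bigcup_{\kappa\in S}\supp(\kappa)\}$. Dividing by $\binom{N}{t+1}$ and using $\binom{N}{t+1}=\frac{N}{t+1}\binom{N-1}{t}$, the bad fraction is at most $(t+1)\,\avg_\nu\Pr_S[\nu\text{ is covered by }S]$, where $\nu$ is a uniformly random codeword and $S$ a uniformly random $t$-subset of the remaining $N-1$ codewords. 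So it suffices to show $\Pr_S[\nu\text{ covered}]$ is at most roughly $\epsilon/(t+1)$ for every fixed $\nu$.

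Second, I would fix $\nu$ and write $X_\kappa=|\supp(\nu)\cap\supp(\kappa)|=w-\dist(\nu,\kappa)/2$ for each of the $N-1$ other codewords $\kappa$; each $X_\kappa$ is an integer in $[0,\,w-d/2]$, the upper end coming from the minimum distance $d$. Since $\supp(\nu)$ has $w$ elements, covering it forces $\sum_{\kappa\in S}X_\kappa\ge w$ by the union bound on the covered set. The essential use of the average distance is in the mean: $\avg\big[\sum_{\kappa\in S}X_\kappa\big]=t\cdot\frac{1}{N-1}\sum_{\kappa\ne\nu}(w-\dist(\nu,\kappa)/2)\le t(w-D/2)$, because $\frac{1}{N-1}\sum_{\kappa\ne\nu}\dist(\nu,\kappa)\ge D$ by definition of $D(\cC)$. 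Thus $\{\nu\text{ covered}\}$ forces $\sum_{\kappa\in S}X_\kappa$ to exceed its mean by at least $w-t(w-D/2)$, and the numerator $w-1-t(w-D/2)$ of the theorem is exactly this margin (the $-1$ coming from integrality).

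Third, I would apply a concentration inequality for a sum of bounded terms sampled without replacement (a Hoeffding- or bounded-differences-type estimate, using that swapping one element of $S$ changes the sum by at most $w-d/2$), giving a sub-Gaussian tail $\Pr[\sum_{\kappa\in S}X_\kappa-\avg\ge\lambda]\le\exp\!\big(-c\,\lambda^2/(t(w-d/2)^2)\big)$. Setting $\lambda=w-1-t(w-D/2)$ and forcing the tail below $\epsilon/(2(t+1))$ produces the $\ln\frac{2(t+1)}{\epsilon}$ factor and, after taking square roots, the stated inequality $\alpha\sqrt{t\ln\frac{2(t+1)}{\epsilon}}\le\frac{w-1-t(w-D/2)}{w-d/2}$; choosing the largest such $t$ makes the bad fraction at most $\epsilon$. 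The main obstacle is precisely this concentration step and its constant: the sampling is without replacement, so the correct finite-population tail bound must be invoked, and tracking its exact constant is what yields $\alpha\ge\sqrt2\,(1+t/(N-1))$—the $\sqrt2$ reflecting the particular sub-Gaussian constant and the $(1+t/(N-1))$ a finite-population correction—rather than the sharp Hoeffding value. I would also verify that the hypothesis $\epsilon>2(N-t)e^{-t}$ guarantees the selected $t$ is meaningful and that the averaging over $\nu$ is arranged so the factor $2(t+1)$, not $t+1$, appears inside the logarithm.
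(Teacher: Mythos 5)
Your proposal is correct and follows essentially the same route as the paper: a union bound over which of the $t+1$ columns fails to have a private coordinate, the definition of average distance (as a minimum over codewords) giving the mean bound $t(w-D/2)$ on the overlap sum, and sub-Gaussian concentration of that sum under sampling without replacement with range $w-d/2$. The only difference is packaging: where you invoke an off-the-shelf finite-population Hoeffding/bounded-differences bound, the paper hand-rolls the concentration step as a Doob martingale (conditioning on the distances $\dist(\bfc_1,\bfc_k)$ one at a time) with an explicit bounded-differences lemma and Azuma's inequality, which is precisely where its constant $\sqrt{2}(1+t/(N-1))$ arises, as you correctly surmised.
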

The proof of this theorem is deferred until after the following remarks.

\vspace{0.1in}
\emph{Remark:} By a simple change in the proof of the Theorem \ref{thm:main1}, it is possible to
see that the test matrix is type-1 $(t,\epsilon)$-disjunct if,
$$
\alpha\sqrt{t\ln\frac{2(N-t)}{\epsilon}} \le \frac{w-1-t(w-D/2)}{w-d/2},
$$
for an absolute constant $\alpha.$

One can compare the results of Prop.~\ref{prop:disj} and Theorem~\ref{thm:main1}
to see the improvement achieved as we relax the definition of disjunct matrices.
This will lead to the final improvement on the parameters of Porat-Rothschild
construction \cite{PR2008}, as we will see in Section~\ref{sec:construction}.

\section{Proof of Theorem \ref{thm:main1}}\label{sec:proof1}
This section is dedicated to the proof of  Theorem \ref{thm:main1}.
Suppose, we have a constant weight binary code $\cC$ of size $N$ and  minimum distance $d$ such 
that every codeword has  length $M$ and weight $w$.  Let the average distance of the code be $D.$ 
Note that this code is fixed:
we will prove a property of this code by probabilistic method .

Let us now chose $(t+1)$ codewords randomly and uniformly from all possible $\binom{N}{t+1}$ choices.
Let the randomly chosen codewords  are $\{\bfc_1,\bfc_2,\dots, \bfc_{t+1}\}.$
In what follows, we adapt the proof of Prop.~\ref{prop:disj} in a probabilistic setting.

Define the
random variables for $i =1,\dots, t+1,$
$
Z^{i} =  \mathop{\sum_{j = 1}^{t+1}}_{j \ne i} \Big(w - \frac{\dist(\bfc_i, \bfc_j)}2\Big).
$
Clearly, $Z^i$ is the maximum possible size of the portion of the support of $\bfc_i$ that
is common to  at least one of $\bfc_j, j =1,\dots,t+1, j\ne i.$ Note that the size of support of $\bfc_i$
is $w$.
Hence, as we have seen in the proof of Prop.~\ref{prop:disj}, if $Z^i$ is
less than $w$ for all $i =1, \dots, t+1$,  then the $M\times (t+1)$ matrix formed
by the $t+1$ codewords must contain $t+1$ rows such that a $(t+1) \times (t+1)$
permutation matrix can be formed. Therefore, we aim to find the probability
$\Pr(\exists i \in \{1,\dots,t+1\} : Z^i \ge w)$  and show it to be bounded above
by $\epsilon$ under the condition of the theorem.

 As the variable $Z^i$s are identically distributed, we see that,
$$
\Pr(\exists i \in \{1,\dots,t+1\} : Z^i \ge w) \le (t+1) \Pr(Z^1 \ge w).
$$
In the following, we will find an upper bound on $ \Pr(Z^1 \ge w).$

Define,
$$
Z_i =\avg\Big( \sum_{j = 2}^{t+1} \Big(w - \frac{\dist(\bfc_1, \bfc_j)}2\Big) \mid  {\dist(\bfc_1, \bfc_k)}, k=2,3,\dots, i  \Big).
$$
Clearly,
$
Z_1 = \avg \Big( \sum_{j = 2}^{t+1} \Big(w - \frac{\dist(\bfc_1, \bfc_j)}2\Big) \Big),
$
and
$
Z_{t+1} =  \sum_{j = 2}^{t+1} \Big(w - \frac{\dist(\bfc_1, \bfc_j)}2\Big) = Z^1.
$

Now,
\begin{align*}
Z_1 &= \avg \Big( \sum_{j = 2}^{t+1} \Big(w - \frac{\dist(\bfc_1, \bfc_j)}2\Big) \Big)
= tw - \frac12 \avg \sum_{j = 2}^{t+1}\dist(\bfc_1, \bfc_j),
\end{align*}
where the expectation is over the randomly and uniformly chosen $(t+1)$ codewords
from all possible $\binom{N}{t+1}$ choices. Note,
\begin{align*}
&\avg \sum_{j = 2}^{t+1}\dist(\bfc_1, \bfc_j) =  \sum_{i_1<i_2<\dots<i_{t+1}}\frac1{\binom N{t+1}}\sum_{j=2}^{t+1} \dist(\bfc_{i_1}, \bfc_{i_j})\\
& =\frac1{(t+1)!\binom N{t+1}}\mathop{\sum_{i_l\ne i_m}}_{1\le l \ne m\le t+1} \sum_{j=2}^{t+1}
\dist(\bfc_{i_1}, \bfc_{i_j})=\frac1{N(N-1)}\sum_{j=2}^{t+1}\sum_{i_1=1}^N\sum_{i_j\ne i_1}
 \dist(\bfc_{i_1}, \bfc_{i_j})\\
&=\sum_{j=2}^{t+1}\avg \dist(\bfc_{i_1}, \bfc_{i_j}) \le tD,
\end{align*}
where the  expectation on the last but one line is over a uniformly chosen pair of distinct random codewords
of $\cC.$ Hence,
$$
Z_1  \le t(w-D/2).
$$

We start with the lemma below.
\begin{lemma}\label{lem:marti}
The sequence of random variables $Z_i, i =1, \dots , t+1,$ forms a martingale.
\end{lemma}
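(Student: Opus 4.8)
The plan is to verify the two defining properties of a martingale for the Doob-type sequence $Z_i$, which has been constructed precisely so that this holds almost by design. Recall that $Z_i$ is defined as the conditional expectation
$$
Z_i =\avg\Big( \sum_{j = 2}^{t+1} \Big(w - \frac{\dist(\bfc_1, \bfc_j)}2\Big) \mid  {\dist(\bfc_1, \bfc_k)}, k=2,\dots, i  \Big),
$$
so each $Z_i$ is the conditional expectation of the fixed final random variable $S := \sum_{j=2}^{t+1}(w - \dist(\bfc_1,\bfc_j)/2)$ given a growing collection of observed distances. First I would set up the filtration explicitly: let $\cF_i$ denote the $\sigma$-algebra generated by the random distances $\dist(\bfc_1,\bfc_k)$ for $k=2,\dots,i$, with $\cF_1$ the trivial $\sigma$-algebra (conditioning on nothing). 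Then by definition $Z_i = \avg(S \mid \cF_i)$.

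With this framing, both martingale axioms follow from standard properties of conditional expectation. First, I would check \emph{integrability and adaptedness}: each $Z_i$ is a conditional expectation of the bounded random variable $S$ (bounded since $0 \le \dist \le 2w$ for weight-$w$ codewords, so each summand lies in $[0,w]$ and $S \in [0, tw]$), hence $Z_i$ is finite and, being a function of $\dist(\bfc_1,\bfc_k)$ for $k \le i$, is $\cF_i$-measurable. Second, since the observed distances form a nested/increasing sequence of conditioning variables, the filtration is a genuine filtration, $\cF_1 \subseteq \cF_2 \subseteq \dots \subseteq \cF_{t+1}$. The \emph{martingale property} $\avg(Z_{i+1} \mid \cF_i) = Z_i$ is then exactly the tower property of conditional expectation:
$$
\avg(Z_{i+1}\mid\cF_i) = \avg\big(\avg(S\mid\cF_{i+1})\mid\cF_i\big) = \avg(S\mid\cF_i) = Z_i,
$$
using $\cF_i \subseteq \cF_{i+1}$.

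I expect the main (and essentially only) subtlety to be a careful justification that the sequence of conditioning sets is genuinely increasing, so that the tower property applies. This amounts to the trivially-true observation that revealing the distances $\dist(\bfc_1,\bfc_k)$ for $k = 2,\dots,i+1$ refines the information available from revealing them only for $k=2,\dots,i$; but it is worth stating cleanly because the randomness here is over an unordered uniformly chosen $(t+1)$-subset of codewords rather than over independent coordinates, so one should be explicit that conditioning on the listed distances is well-defined and that $Z_{t+1} = S = Z^1$ (no expectation remains once all $t$ relevant distances are revealed) while $Z_1 = \avg(S)$ (a constant). Once the filtration is identified and its monotonicity noted, the martingale property is immediate from the tower rule and requires no computation, so I would keep the proof to a few lines rather than re-deriving conditional-expectation identities from scratch.
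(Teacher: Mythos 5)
Your proof is correct and follows essentially the same route as the paper: both treat $Z_i$ as a Doob martingale $\avg(S\mid\cF_i)$ with respect to the increasing filtration generated by the revealed distances, and both reduce the martingale property to the tower rule of conditional expectation (the paper merely writes this out more computationally, decomposing the sums and cancelling terms after conditioning on $Z_1,\dots,Z_{i-1}$). Your explicit checks of integrability, adaptedness, and nestedness of the filtration are exactly the points the paper addresses by constructing the $\sigma$-algebras $\cF_k$ as refinements of partitions of the $\binom{N}{t+1}$ choices, so nothing is missing.
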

The statement is true by construction. For completeness we present a  proof that is deferred to Appendix \ref{app:one}.
Once we have proved that the sequence is a martingale, we show that it is a bounded-difference
martingale.
\begin{lemma}\label{lem:bounded}
For any $i = 2,\dots,t+1$,
$$
|Z_i -Z_{i-1}| \le (w-d/2) \Big(  1 +  \frac{t-i+1}{N-i}\Big).
$$
\end{lemma}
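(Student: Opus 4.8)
The plan is to treat $(Z_i)_{i=1}^{t+1}$ as a Doob martingale---its martingale property being granted by Lemma~\ref{lem:marti}---and to bound each increment by an oscillation. Writing $\cF_i$ for the information revealed through step $i$, the martingale property gives $Z_{i-1}=\avg[Z_i\mid\cF_{i-1}]$, so
$$
Z_i-Z_{i-1}=Z_i-\avg[Z_i\mid\cF_{i-1}],
$$
and hence $|Z_i-Z_{i-1}|$ is at most the oscillation of $Z_i$ as the single freshly revealed quantity $\dist(\bfc_1,\bfc_i)$ varies over its admissible values, with $\dist(\bfc_1,\bfc_k)$, $k=2,\dots,i-1$, held fixed. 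Throughout I would regard the martingale as conditioned on the distinguished codeword $\bfc_1$ (consistent with the earlier reduction to bounding $\Pr(Z^1\ge w)$); this turns the relevant intersection sizes into a \emph{fixed} population of numbers, which is what makes the increment computation clean.

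For $k\ge 2$ set $a_k=w-\dist(\bfc_1,\bfc_k)/2=|\supp(\bfc_1)\cap\supp(\bfc_k)|$. Because two distinct codewords lie at Hamming distance at least $d$ and at most $2w$, each $a_k$ lies in the interval $[0,\,w-d/2]$ of length $w-d/2$; this is the origin of the prefactor. With $\bfc_1$ fixed, the codewords $\bfc_2,\dots,\bfc_{t+1}$ form a uniform sample, drawn without replacement, from the $N-1$ remaining codewords, and conditioning on the distances $\dist(\bfc_1,\bfc_k)$ is exactly observing the sampled values $a_k$. Consequently, letting $T=\sum_{\bfc\in\cC\setminus\{\bfc_1\}}\big(w-\dist(\bfc_1,\bfc)/2\big)$ be the population total, every not-yet-revealed term $j>i$ has the same conditional mean, namely the average of the values remaining in the pool:
$$
\avg\Big[w-\tfrac{\dist(\bfc_1,\bfc_j)}2\;\Big|\;\cF_i\Big]=\frac{T-\sum_{k=2}^{i}a_k}{N-i}.
$$
Summing the $i-1$ known terms and the $t-i+1$ unknown ones gives the closed form $Z_i=\sum_{k=2}^{i}a_k+(t-i+1)\,\frac{T-\sum_{k=2}^{i}a_k}{N-i}$.

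Now I would simply read off the oscillation. Holding $\sum_{k=2}^{i-1}a_k$ fixed and varying only the new value $a_i$, the closed form shows $Z_i$ is affine in $a_i$ with slope $1-\frac{t-i+1}{N-i}$. Since $a_i$ ranges over an interval of length at most $w-d/2$, the oscillation of $Z_i$---and therefore $|Z_i-Z_{i-1}|$---is at most
$$
\Big|1-\tfrac{t-i+1}{N-i}\Big|\,(w-d/2)\;\le\;\Big(1+\tfrac{t-i+1}{N-i}\Big)(w-d/2),
$$
which is the claim. Equivalently, and without the (slightly wasteful) last inequality, one may argue by the triangle inequality directly: changing $\bfc_i$ alters the single known term $a_i$ by at most $w-d/2$ (the ``$1$''), while it shifts the common future mean by $a_i/(N-i)$, i.e.\ by at most $(w-d/2)/(N-i)$ on each of the $t-i+1$ future terms (the ``$\frac{t-i+1}{N-i}$'').

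The step I expect to be the main obstacle is the future part, i.e.\ justifying the closed form for the conditional mean and the exact denominator $N-i$. This is entirely a without-replacement phenomenon, so the independent-coordinates bounded-difference heuristic does not apply verbatim; the device that resolves it is to condition on $\bfc_1$, whereupon the intersection sizes $\{|\supp(\bfc_1)\cap\supp(\bfc)|\}$ become a fixed list of numbers sampled without replacement, observing a distance is observing a sampled value, and the remaining conditional mean is (population total $-$ observed sum)$/$(remaining count). Verifying that this mean is independent of the unknown identities of the already-drawn codewords---so that one really may average over the posterior---is the small but essential point; it also explains why conditioning on $\bfc_1$ is needed, since for a random $\bfc_1$ the observed distances would couple through $\bfc_1$'s own distance profile and the denominator would no longer reduce cleanly to $N-i$.
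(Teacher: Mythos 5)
Your proof is correct, and at the top level it follows the same strategy as the paper's own argument: both bound the Doob-martingale increment by the oscillation of $Z_i$ as the single newly revealed distance varies, exploiting the without-replacement structure of the sample. The differences are in execution, and they matter. The paper argues term by term: it writes the oscillation as $|a-b|$ plus the sum over the $t-i+1$ future terms of the shifts in their conditional means, and then simply \emph{asserts} that each such shift is at most $(w-d/2)/(N-1-(i-1))$, with no justification of the denominator or of why conditioning on distance \emph{values} behaves like conditioning on codeword \emph{identities}. Your closed form $Z_i=\sum_{k\le i}a_k+(t-i+1)\,\bigl(T-\sum_{k\le i}a_k\bigr)/(N-i)$, justified by the tower argument showing the future conditional mean depends on the drawn codewords only through their observed values, is precisely the missing proof of that step. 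Two further points in your favor: (i) your affine-slope computation shows the direct term and the future-mean shift have \emph{opposite} signs, so the increment is actually at most $(w-d/2)\bigl|1-\tfrac{t-i+1}{N-i}\bigr|\le w-d/2$, strictly sharper than the stated bound, which you then recover only by the (as you say, wasteful) triangle inequality that the paper uses; (ii) your insistence on conditioning on $\bfc_1$ repairs a real looseness: the paper's main text defines $Z_i$ by conditioning on distances alone with $\bfc_1$ random, under which the posterior on $\bfc_1$ couples to the observed values and the clean $N-i$ denominator is no longer justified, whereas the identity-based filtration of the paper's Appendix A (and your conditioning) makes the lemma sound. Your fix is also compatible with the downstream Azuma step, since $\avg\bigl(Z^1\mid\bfc_1\bigr)\le t(w-D/2)$ holds for \emph{every} $\bfc_1$, the average distance $D$ being defined via a minimum over codewords.
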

The proof is deferred to Appendix \ref{app:two}.

Now using Azuma's inequality for martingale with bounded difference \cite{Mc1989},
we have,
$$
\Pr\Big(|Z_{t+1} -Z_1|>  \nu) \le 2\exp\Big(-\frac{\nu^2}{2(w-d/2)^2\sum_{i=2}^{t+1}c_i^2} \Big),
$$
where, $c_i =   1 +  \frac{t-i+1}{N-i}$.
This implies,
$$
\Pr\Big(|Z_{t+1}| > \nu + t(w-D/2)\Big) \le 2\exp\Big(-\frac{\nu^2}{2(w-d/2)^2\sum_{i=2}^{t+1}c_i^2} \Big).
$$
Setting, $\nu = w -1-  t(w-D/2)$, we have,
\begin{align*}
\Pr\Big(Z^1 > w-1\Big)
&\le 2\exp\Big(-\frac{(w -1-  t(w-D/2))^2}{2(w-d/2)^2\sum_{i=2}^{t+1}c_i^2} \Big).
\end{align*}
Now,
\begin{align*}
\sum_{i=2}^{t+1}c_i^2 &\le t\Big(1+\frac{t-1}{N-2}\Big)^2.
\end{align*}
Hence,
\begin{align*}
\Pr(\exists i \in \{1,\dots,t+1\} : Z^i \ge w) &\le 2(t+1)\exp\Big(-\frac{(w -1-  t(w-D/2))^2}{2t(w-d/2)^2\Big(1+  \frac{t-1}{N-2}\Big)^2} \Big)< \epsilon,
\end{align*}
when,
$$
d/2 \ge w - \frac{w-1-t(w-D/2)}{\alpha\sqrt{t\ln\frac{2(t+1)}{\epsilon}}},
$$
and $\alpha$ is a constant greater than $\sqrt{2}\Big(1+  \frac{t-1}{N-2}\Big).$

\section{Construction}\label{sec:construction}
As we have seen in Section~\ref{sec:disjunct}, constant weight codes can be used to
produce disjunct matrices. Kautz and Singleton \cite{KS1964} gives a construction
of constant weight codes that results in good disjunct matrices.
In their construction, they start with a Reed-Solomon (RS) code, a
$q$-ary error-correcting code  of length
$q-1.$ For a detailed discussion of RS codes we refer the reader
to the standard textbooks of coding theory \cite{MS1977,R2006}.
Next they replace the $q$-ary symbols in the codewords by unit weight
binary vectors of length $q$. The mapping from $q$-ary symbols to length-$q$
unit weight binary vectors is bijective: i.e., it is $0\to 100\dots0; 1\to 010\dots0;\dots;
q-1\to 0\dots01.$ We refer to this mapping as $\phi.$ As a result, one obtains a set of  binary vectors
of length $q(q-1)$ and constant weight $q.$ The size of the resulting binary code
is same as the size of the RS code, and the distance of the binary code
is twice that of the distance of the RS code.

\subsection{Consequence of Theorem~\ref{thm:main1} in Kautz-Singleton construction}
For a $q$-ary RS code of size $N$ and length $q-1$, the
minimum distance is $q-1-\log_q{N}+1 = q-\log_q{N}.$ Hence,
the Kautz-Singleton construction is a constant-weight code with length $M=q(q-1)$, weight $w=q-1$,
size $N$ and distance $2(q-\log_q{N})$. Therefore,
from Prop.~\ref{prop:disj},
we have a $t$-disjunct matrix with,
$$
t = \frac{q-1-1}{q-1-q+\log_q{N}} = \frac{q-2}{\log_q{N}-1}\approx \frac{q\log q}{\log N} \approx \frac{\sqrt{M}\log M}{2\log N}.
$$
On the other hand, note that, the average distance of the RS code is
$\frac{N}{N-1}(q-1)(1-1/q).$ Hence the average distance of the resulting
constant weight code from Kautz-Singleton construction will be
$$
D= \frac{2N(q-1)^2}{q(N-1)}.
$$
Now, substituting these values in Theorem~\ref{thm:main1}, we have a type-1
$(t,\epsilon)$ disjunct matrix, where,
$$
\alpha\sqrt{t\ln\frac{2(N-t)}{\epsilon}} \le \frac{(q-t)\log q}{\log N}  \approx \frac{(\sqrt{M}-t)\log M}{2\log N}.
$$
Suppose $t \le \sqrt{M}/2$. Then,
$$
M (\ln M)^2 \ge 4\alpha^2 t (\ln N)^2 \ln\frac{2(N-t)}{\epsilon}
$$
This basically restricts $t$ to be about $O(\sqrt{M}).$ Hence, Theorem~\ref{thm:main1} does not
obtain any meaningful improvement from the Kautz-Singleton construction except in special cases.

There are two places where the Kautz-Singleton construction can be improved:
1) instead of Reed-Solomon code one can use any other
$q$-ary code of different length, and 2) instead of the mapping $\phi$
any binary constant weight code  of size $q$ might have been used.
For a general discussion we refer the reader to \cite[\S7.4]{DH2000}.
In the recent work \cite{PR2008}, the mapping $\phi$ is kept the same, while
the RS code has been changed to a $q$-ary code that achieve the Gilbert-Varshamov
bound \cite{MS1977, R2006}.

In our construction of disjunct matrices we follow the footsteps of \cite{KS1964, PR2008}.  However, we exploit some  property
of the resulting scheme (namely, the average distance) and do a
finer analysis that was absent from the previous works such as \cite{PR2008}.

\subsection{$q$-ary code construction}
We choose $q$ to be a power of a prime number and write $q =\beta t$, for some constant $\beta>2.$
The value of $\beta$ will be chosen later.
Next, we construct a {\em linear} $q$-ary code
of size $N$, length $M_q$ and minimum distance $d_q$ that
achieves the Gilbert-Varshamov bound \cite{MS1977,R2006}, i.e.,
\begin{equation}\label{eq:GV}
\frac{\log_q N}{M_q} \ge 1- \h_q\Big(\frac{d_q}{M_q}\Big)-o(1),
\end{equation}
where $\h_q$ is the $q$-ary entropy function defined by,
$$
\h_q(x) = x\log_q \frac{q-1}{x} +(1-x)\log_q\frac1{1-x}.
$$

Porat and Rothschild \cite{PR2008} show that it is
possible to construct  in time $O(M_q N)$ a $q$-ary code that achieves the Gilbert-Varshamov (GV)
bound. To have such construction, they exploit the
following  well-known fact: a $q$-ary linear code with random generator matrix
achieves the  GV bound with high probability \cite{R2006}. To have an explicit construction
of such codes, a derandomization method known as the method of conditional
expectation \cite{AS2000} is used. In this method, the entries of the generator
matrix of the code are chosen one-by-one so that the minimum distance of the resulting code
does not go below the value prescribed by \eqref{eq:GV}. For a detail description of
the procedure, see \cite{PR2008}.

With the above construction with proper parameters we can have a disjunct matrix with
the following property.
\begin{theorem}\label{thm:main2}
Suppose $\epsilon > 2(t+1)e^{-at}$ for some constant $a >1$. It is possible to explicitly construct a type-2 $(t,\epsilon)$-disjunct matrix of size $M\times N$ where
$$
M  = O\Big(t^{3/2} \ln N \frac{\sqrt{\ln\frac{2(t+1)}{\epsilon}}}{\ln t- \ln \ln \frac{2(t+1)}{\epsilon}+\ln(4a)} \Big).
$$
\end{theorem}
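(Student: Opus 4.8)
The plan is to instantiate Theorem~\ref{thm:main1} on the Kautz–Singleton binary code obtained by applying the map $\phi$ to the Porat–Rothschild $q$-ary \emph{linear} code that meets the Gilbert–Varshamov bound, treating the alphabet size $q=\beta t$ and the relative distance of the $q$-ary code as free parameters to be optimized at the end.

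First I would record the parameters of the resulting binary constant-weight code. If the $q$-ary code has length $M_q$, size $N$, dimension $\log_q N$ and minimum distance $d_q=\delta_q M_q$, then $\phi$ yields a binary code of length $M=qM_q$, constant weight $w=M_q$ and minimum distance $d=2d_q$, since each disagreeing $q$-ary coordinate contributes exactly two to the binary Hamming distance. The crucial (and only genuinely new) ingredient is the average distance $D$. Because the $q$-ary code is linear, for every codeword $\bfx$ the sum $\sum_{\bfy\ne\bfx}\dist(\bfx,\bfy)$ equals the total weight $\sum_{\bfz\in\cC\setminus\{0\}}\mathrm{wt}(\bfz)$ of the nonzero codewords, which is independent of $\bfx$; so the minimum defining $D(\cC)$ is attained trivially. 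Provided the code has no identically-zero coordinate (which we may assume, since deleting such a coordinate only improves the parameters), each of the $M_q$ coordinates is nonzero in exactly a $(1-1/q)$ fraction of codewords, whence the $q$-ary average distance is exactly $M_q(1-1/q)\tfrac{N}{N-1}$ and the binary average distance is $D=2M_q(1-1/q)\tfrac{N}{N-1}$. This is precisely what separates the present analysis from the minimum-distance bound of Prop.~\ref{prop:disj}: it gives $w-D/2=M_q\tfrac{N-q}{q(N-1)}\approx M_q/q$, far smaller than the worst-case intersection $w-d/2=M_q(1-\delta_q)$ whenever $\delta_q<1-1/q$.

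Next I would feed these values into the condition of Theorem~\ref{thm:main1}. With $q=\beta t$ we get $t(w-D/2)\approx M_q/\beta$, so the numerator satisfies $w-1-t(w-D/2)\approx M_q(1-1/\beta)$, and the required inequality becomes, up to lower-order terms,
$$
\alpha\sqrt{t\ln\tfrac{2(t+1)}{\epsilon}}\;\le\;\frac{1-1/\beta}{1-\delta_q}.
$$
This dictates the design choice: set $1-\delta_q$ to be of order $1/\sqrt{tL}$, where $L=\ln\tfrac{2(t+1)}{\epsilon}$, pushing the $q$-ary relative distance $\delta_q$ close to its ceiling $1-1/q$. Since $L=O(t)$ forces $1/\sqrt{tL}\gg 1/(\beta t)$ once $\beta$ is a large enough constant relative to $a$, this target is admissible.

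Finally I would convert this distance choice into the promised bound on $M=qM_q$ through the GV relation $M_q=\log_q N/(1-h_q(\delta_q))$. The main technical step is the asymptotic expansion of the $q$-ary entropy near $\delta_q=1-x$ with $x=1-\delta_q$ small: a short computation gives $1-h_q(\delta_q)\approx x\,\frac{\ln(qx)}{\ln q}$, and substituting $x\asymp 1/\sqrt{tL}$ and $q=\beta t$ turns this into $1-h_q(\delta_q)\asymp\frac{1}{\sqrt{tL}}\cdot\frac{\ln t-\ln L+O(1)}{\ln t}$ (note $qx\asymp\sqrt{t/L}>1$ keeps this positive). Hence $M_q\asymp\frac{\sqrt{tL}\,\ln N}{\ln t-\ln L+O(1)}$ and $M=\beta t M_q$ has the claimed order. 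The hypothesis $\epsilon>2(t+1)e^{-at}$ with $a>1$ enters exactly here: it forces $L<at$, hence $\ln t-\ln L>-\ln a$, so the denominator $\ln t-\ln L+\ln(4a)$ stays bounded below by $\ln 4>0$ and all estimates remain valid. I expect the entropy expansion, with its error terms controlled uniformly in $q,t,\epsilon$, to be the most delicate part, while the linear-code average-distance identity is the key conceptual step that makes the whole improvement possible.
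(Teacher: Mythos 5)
Your proposal is correct and takes essentially the same route as the paper's own proof: the Kautz--Singleton construction applied to a Porat--Rothschild GV-achieving \emph{linear} $q$-ary code with $q=\beta t$, the average-distance value $D=\tfrac{2N}{N-1}M_q(1-1/q)$ fed into Theorem~\ref{thm:main1}, the choice $1-\delta_q \asymp 1/(\alpha\sqrt{t\ln\tfrac{2(t+1)}{\epsilon}})$, and the $q$-ary entropy expansion (the paper's Lemma~\ref{lem:identity}) to extract $M=qM_q$ of the stated order. The only cosmetic differences are that you derive the average distance by direct coordinate counting where the paper invokes the binomial-mean fact for linear codes, and you keep the entropy expansion and the positivity of the denominator (via $L<at$ and $\beta$ large relative to $a$) at sketch level where the paper fixes $\beta>2e\alpha\sqrt{a}+1$ and proves the expansion in an appendix.
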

To prove this theorem we need the following identity implicit in \cite{PR2008}.
We present the proof here for completeness.
\begin{lemma}\label{lem:identity}
For any $q>s$,
$$
1- \h_q(1-1/s) = \frac{1}{s\ln q}\Big(\ln\frac{q}{s} +\frac{s}{q} -1\Big) - o\Big(\frac1{s\ln q}\Big).
$$
\end{lemma}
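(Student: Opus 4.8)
The plan is to prove the identity
$$
1 - \h_q(1-1/s) = \frac{1}{s\ln q}\Big(\ln\frac{q}{s} + \frac{s}{q} - 1\Big) - o\Big(\frac{1}{s\ln q}\Big)
$$
by direct substitution into the definition of the $q$-ary entropy function, followed by careful Taylor expansion of the logarithmic terms. Let me think about what this requires.

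First, set $x = 1 - 1/s$, so that $1 - x = 1/s$. The entropy function is
$$
\h_q(x) = x \log_q\frac{q-1}{x} + (1-x)\log_q\frac{1}{1-x}.
$$
Substituting, the second term becomes $(1/s)\log_q s$. The first term is $(1-1/s)\log_q\frac{q-1}{1-1/s} = (1-1/s)\log_q\frac{s(q-1)}{s-1}$. So I would begin by writing
$$
\h_q(1-1/s) = \Big(1-\frac1s\Big)\log_q\frac{s(q-1)}{s-1} + \frac1s\log_q s.
$$

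**Second**, I would convert everything from $\log_q$ to natural logarithms (dividing by $\ln q$) and compute $1 - \h_q(1-1/s)$. The key algebraic move is to write $1 = \log_q q = \frac{\ln q}{\ln q}$ and absorb it so that the leading $\ln q$ terms cancel, leaving an expression whose dominant part scales like $\frac{1}{s\ln q}$. Concretely I expect the numerator (after multiplying through by $\ln q$) to reduce to a combination of $\ln\frac{q}{q-1}$, $\ln\frac{s}{s-1}$, and $\ln s$ terms weighted by $1/s$. The target right-hand side suggests that the $\ln(q-1)$ contributions must recombine: $\ln(q-1) = \ln q + \ln(1-1/q)$, and the correction $\ln(1-1/q) \approx -1/q$ is what produces the $s/q$ term after multiplication by the appropriate coefficient.

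**Third** — and this is where the main obstacle lies — I would handle the expansions $\ln(1-1/q) = -\frac1q + O(1/q^2)$ and $\ln\frac{s}{s-1} = -\ln(1-1/s) = \frac1s + O(1/s^2)$, keeping track of which error terms are genuinely $o\big(\frac{1}{s\ln q}\big)$. The delicate point is bookkeeping: since $q = \beta t$ and $s$ is comparable to $q$ in the intended application, terms of order $\frac{1}{q}$, $\frac{1}{s^2}$, and $\frac{1}{sq}$ all appear, and I must verify that exactly the stated main term $\frac{1}{s\ln q}\big(\ln\frac{q}{s} + \frac{s}{q} - 1\big)$ survives while everything else is absorbed into the error. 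I would isolate the coefficient of each scale and confirm that the surviving linear-in-$(1/q)$ contribution inside the parenthesis is precisely $s/q$ (coming from $(1-1/s)\cdot(-\ln(1-1/q))\approx 1/q$, rescaled by the $\frac{1}{s\ln q}$ prefactor) and that the constant $-1$ arises from the $\frac1s \cdot s = 1$ type cancellation against the leading $1 = \log_q q$. The whole argument is a routine but error-prone asymptotic computation; the real care is in the order of grouping terms so that the cancellation of the $\ln q$ leading order is manifest before expanding the remainders.
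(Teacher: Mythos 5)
Your proposal is correct and follows essentially the same route as the paper: substitute $x=1-1/s$ into the definition of $\h_q$, regroup so the leading $\ln q$ cancels, and apply $\ln x - \ln(x-1) = \frac1x + o(1/x)$ (equivalently $-\ln(1-1/x)\approx \frac1x$) to both the $q$- and $s$-log-differences, absorbing the $O(1/q^2)$, $O(1/(qs))$, and $O(1/s^2)$ leftovers into $o\big(\frac{1}{s\ln q}\big)$. One small correction to your bookkeeping: the constant $-1$ in the parenthesis comes from $-(1-\frac1s)\ln\frac{s}{s-1} \approx -\frac1s$ (rescaled by $\frac{1}{\ln q}$), while the cancellation of $1=\log_q q$ against $-\frac1s\log_q s$ is what produces the $\frac{1}{s\ln q}\ln\frac{q}{s}$ term, not the $-1$.
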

The proof of this is deferred to Appendix \ref{app:three}.
Now we are ready to prove Theorem~\ref{thm:main2}.

\begin{proof}[Proof of Theorem~\ref{thm:main2}]
We follow the Kautz-Singleton  code construction. 
We take a linear $q$-ary code $\cC'$ of length $M_q \triangleq \frac{M}{q}$, size $N$ and
minimum distance $d_q  \triangleq \frac{d}{2}.$ Each $q$-ary symbol in the codewords is then
replaced with a binary indicator vector of length $q$ (i.e., the binary vector whose all entries
are zero but one entry, which is $1$) according to the map $\phi$. As a result we have a binary code $\cC$ of length $M$ and
size $N$. The minimum distance of the code is $d$ and the codewords are of constant weight $w = M_q = \frac{M}{q}.$
The average distance of this code is twice the average distance of the $q$-ary code. As $\cC'$ is linear
(assuming it has no all-zero coordinate), it has average distance equal to
\begin{align*}
\frac{1}{N-1}\sum_{j =1}^{M_q} j A_j &= \frac{N}{N-1}\sum_{j=0}^{M_q}j\binom{M_q}{j}(1-1/q)^j(1/q)^{M_q-j}\\
& =  \frac{N}{N-1}M_q(1-1/q),
\end{align*}
where $A_j$ is the number of codewords of weight $j$ in $\cC'.$
Here we use the fact that the average of the  distance between any two randomly chosen codewords of
a nontrivial linear code is equal to that of a binomial random variable  \cite{MS1977}.
Hence the constant weight code $\cC$ has average distance
$$
D =  \frac{2N}{N-1}M_q(1-1/q).
$$
The resulting matrix will be $(t,\epsilon)$-disjunct if the condition of Theorem \ref{thm:main1} is satisfied,
i.e.,
\begin{align*}
d_q \ge M_q - \frac{M_q -1 -t(M_q -  \frac{N}{N-1}M_q(1-1/q))}{\alpha\sqrt{t\ln\frac{2(t+1)}{\epsilon}}}= M_q -  \frac{M_q -1 -\frac{tM_q}{N-1}(N/q-1)}{\alpha\sqrt{t\ln\frac{2(t+1)}{\epsilon}}}
\end{align*}
or if, $
d_q \ge M_q -  \frac{M_q -1 -\frac{tM_q}{q}}{\alpha\sqrt{t\ln\frac{2(t+1)}{\epsilon}}}.
$

To construct a desired $q$-ary code, we use the ideas of \cite{PR2008}
where the explicitly constructed codes  meet the Gilbert-Varshamov bound. It is possible to construct in
time polynomial in $N,M_q$, a $q$-ary code of length $M_q$, size $N$ and distance $d_q$ when
\begin{align*}
\frac{\log_q N}{M_q} \le 1- \h_q\Big(\frac{d_q}{M_q}\Big).
\end{align*}

Therefore, explicit polynomial time construction of a type-2 $(t,\epsilon)$-disjunct matrix will be
possible whenever,
\begin{align}\label{eq:cond}
\frac{\log_q N}{M_q} \le 1- \h_q\Big(1-\frac{1-1/M_q-\frac{t}{q}}{\alpha\sqrt{t\ln\frac{2(t+1)}{\epsilon}}}\Big).
\end{align}

Let us now use the fact that we have taken  $q= \beta t$ to be a prime power
  for some constant $\beta$. Let us chose $ \beta > 2e \alpha \sqrt{a}+1.$

Hence,
$$
\frac{1-1/M_q-\frac{t}{q}}{\alpha\sqrt{t\ln\frac{2(t+1)}{\epsilon}}} = \frac{1}{\gamma\sqrt{t\ln\frac{2(t+1)}{\epsilon}}} = \frac1s (\text{say}),
$$
for an absolute constant $\gamma \approx \frac{\alpha\beta}{\beta-1}.$
At this point, we see,
$$
\frac{q}{s} = \frac{\beta t}{\gamma\sqrt{t\ln\frac{2(t+1)}{\epsilon}}} > \frac{\beta}{\gamma \sqrt{a}} > 2e,
$$
from the condition on $\epsilon$ and the values of $\beta,\gamma.$
Now, using Lemma~\ref{lem:identity},
the right hand side of Eqn.~\eqref{eq:cond} equals to
\begin{align*}
   \frac{1}{s\ln q}\Big(\ln\frac{q}{s} +\frac{s}{q} -1\Big) - o\Big(\frac1{s\ln q}\Big) \ge \frac{\ln\frac{q}{s}-1 -o(1)}{s\ln q}.
\end{align*}
Then
  explicit polynomial time construction of a type-2  $(t,\epsilon)$-disjunct matrix will be
possible whenever,
\begin{align*}
\frac{\log_q N}{M_q} \le \frac{\ln \frac{\beta t }{\gamma\sqrt{t\ln\frac{2(t+1)}{\epsilon}}}-1-o(1)}{\gamma\sqrt{t\ln\frac{2(t+1)}{\epsilon}} \ln(\beta t)}
\end{align*}
or,
\begin{align*}
M = qM_q &\ge \beta t \frac{\ln N}{\ln(\beta t)}\frac{\gamma\sqrt{t\ln\frac{2(t+1)}{\epsilon}} \ln(\beta t)}{\ln \frac{\beta t }{\gamma\sqrt{t\ln\frac{2(t+1)}{\epsilon}}}-1-o(1)} \\
& = \beta\gamma t^{3/2} \ln N \frac{\sqrt{\ln\frac{2(t+1)}{\epsilon}}}{\frac12(\ln t- \ln \ln \frac{2(t+1)}{\epsilon})+\ln\frac\beta\gamma -1-o(1)}.
\end{align*}
The condition on $\epsilon$ and the value chosen for $\beta$ ensure that the denominator is strictly positive. Hence it suffices to have,
$$
M \ge 2\beta\gamma t^{3/2} \ln N \frac{\sqrt{\ln\frac{2(t+1)}{\epsilon}}}{\ln t- \ln \ln \frac{2(t+1)}{\epsilon}+\ln (4a) -o(1)}.
$$
\end{proof}
\vspace{0.1in}

Note that the implicit constant in Theorem~\ref{thm:main2} is proportional to $\sqrt{a}.$ We have not particularly tried to optimize the
constant. However even then the value of the constant is about $8e\sqrt{a}.$

\vspace{0.1in}
\emph{Remark:} As in the case of Theorem \ref{thm:main1}, with a simple change in the proof,
 it is easy to
see that one can construct a test matrix that is type-1 $(t,\epsilon)$-disjunct if,
$$
M  = O\Big(t^{3/2} \ln N \frac{\sqrt{\ln\frac{2(N-t)}{\epsilon}}}{\ln t- \ln \ln \frac{2(N-t)}{\epsilon}+\ln(4a)}\Big),
$$
for any $\epsilon > 2(N-t)e^{-at},$ and a constant $a$.

It is clear from Prop.~\ref{prop:scheme} that a type-1 $(t,\epsilon)$ disjunct
matrix is equivalent to a group testing scheme. Hence, as
 a consequence of Theorem~\ref{thm:main2} (specifically, the remark above),
we will be able to construct a testing scheme with $$O\Big(t^{3/2} \log N \frac{\sqrt{\log\frac{2(N-t)}{\epsilon}}}{\log t- \log \log \frac{2(N-t)}{\epsilon}}\Big)$$
tests. Whenever the defect-model is such that all the possible defective sets of
size $t$ are equally likely and there are no more than $t$ defective elements,
the above group testing scheme will be successful with probability at lease $1-\epsilon.$

Note that, if $t$ is proportional to any positive power of $N$, then $\log N$ and
$\log t$ are of same order. Hence it will be possible to have the above testing
scheme with $O(t^{3/2}\sqrt{\log (N/\epsilon)})$ tests, for any  $\epsilon> 2(N-t)e^{-t}$.

\section{Conclusion}
In this work we show that it is possible to construct non-adaptive 
group testing schemes with small number of tests that identify a
uniformly chosen random defective configuration with high probability.
To construct a $t$-disjunct matrix 
one starts with the simple relation between the minimum distance $d$ of a
constant $w$-weight code and $t.$ This is an example of a scenario where
a pairwise property (i.e., distance) of the elements of a set is
translated into a property of $t$-tuples.

Our method of analysis provides a general way to prove that a property
holds for almost all $t$-tuples of elements from a set based on the mean pairwise statistics
of the set. Our method will be useful in many areas of applied combinatorics,
such as digital fingerprinting or design of
key-distribution schemes, where such a translation is evident. For example, with our method
new results can be obtained for
 the cases of cover-free codes \cite{DVTM2002,SWJ2000,KS1964}, traceability and frameproof
codes \cite{CFN1994,SSW2001}. This is the subject of our 
ongoing work.



\newpage

\appendix
\section{Proof of Lemma \ref{lem:marti}}\label{app:one}
We have   created a  sequence here that is a martingale by construction.  This is a standard
method due to Doob  \cite{D1953, Mc1989}. Let,
$$
w - \frac{\dist(\bfc_1, \bfc_j)}2 = Y_j.
$$
Consider the $\sigma$-algebras $\cF_k$, $k =0, \dots, t+1$, where
$\cF_0 = \{\emptyset, [N]\}$ and $\cF_k$ is generated by the partition of the set of $\binom N {t+1}$ possible
choices for $(t+1)$-sets
into $\binom N k$ subsets with the fixed value of the first
$k$ indices, $1\le k\le t+1.$ The sequence of increasingly refined partitions
$\cF_1\subset\cF_2\subset\dots\subset\cF_{t+1}$ forms a filtration such that
$Z_k$ is measurable with respect to $\cF_k$ (is constant on the atoms of the
partition).
 
We have,
\begin{align*}
Z_i &= \avg \Big( \sum_{j=2}^{t+1} Y_j \mid Y_2,\dots, Y_i \Big)\\
&=  \sum_{j=2}^{i} Y_j + \avg \Big( \sum_{j=i+1}^{t+1} Y_j \mid Y_2,\dots, Y_i \Big)\\
&=  Z_{i-1} +Y_i + \avg \Big( \sum_{j=i+1}^{t+1} Y_j \mid Y_2,\dots, Y_i \Big) - \avg \Big( \sum_{j=i}^{t+1} Y_j \mid Y_2,\dots, Y_{i-1} \Big).
\end{align*}
We then have,
\begin{align*}
\avg \Big(Z_i \mid Z_1,\dots,Z_{i-1}\Big)
&= Z_{i-1} + \avg\Big(Y_i \mid Z_1,\dots,Z_{i-1} \Big) \\
&+\avg\Big(\avg\Big( \sum_{j=i+1}^{t+1} Y_j \mid Y_2,\dots, Y_i\Big)\mid Z_1,\dots,Z_{i-1} \Big)\\
&-\avg\Big( \avg \Big( \sum_{j=i}^{t+1} Y_j \mid Y_2,\dots, Y_{i-1} \Big)\mid Z_1,\dots,Z_{i-1}\Big)\\
& = Z_{i-1} + \avg\Big(Y_i \mid Z_1,\dots,Z_{i-1} \Big) \\
&+ \avg\Big( \sum_{j=i+1}^{t+1} Y_j \mid Z_1,\dots,Z_{i-1}\Big)-\avg \Big( \sum_{j=i}^{t+1} Y_j \mid Z_1,\dots,Z_{i-1} \Big)\\
&= Z_{i-1}.
\end{align*}

\section{Proof of Lemma \ref{lem:bounded}}\label{app:two}
Let us again assume that, $$
w - \frac{\dist(\bfc_1, \bfc_j)}2 = Y_j.
$$
We have,

\begin{align*}
&|Z_i -Z_{i-1}| \\&=  \Big|\avg \Big( \sum_{j=2}^{t+1} Y_j \mid Y_2,\dots, Y_i \Big)  - \avg \Big( \sum_{j=2}^{t+1} Y_j \mid Y_2,\dots, Y_{i-1} \Big)\Big|\\
&\le \mathop{\max_{0\le a,b}}_{\le w-d/2}  \Big|\avg \Big( \sum_{j=2}^{t+1} Y_j \mid Y_2,\dots, Y_i=a \Big)  - \avg \Big( \sum_{j=2}^{t+1} Y_j \mid Y_2,\dots, Y_{i-1}, Y_i =b \Big)\Big|\\
&= \mathop{\max_{0\le a,b}}_{\le w-d/2}  \Big|\sum_{j=1}^{t+1}\Big(\avg \Big(  Y_j \mid Y_2,\dots, Y_i=a \Big)  - \avg \Big( Y_j \mid Y_2,\dots, Y_{i-1}, Y_i =b \Big)\Big)\Big|\\
&=  \mathop{\max_{0\le a,b}}_{\le w-d/2}  \Big| a-b + \sum_{j=i+1}^{t+1}\Big(\avg \Big(  Y_j \mid Y_2,\dots, Y_i=a \Big)
- \avg \Big( Y_j \mid Y_2,\dots, Y_i =b \Big)\Big)\Big|\\
&\le   \mathop{\max_{0\le a,b}}_{\le w-d/2}  \Big| w-d/2 +\sum_{j=i+1}^{t+1} \Big[\avg \Big(w-\frac{\dist(\bfc_1,\bfc_j)}2 \mid \dist(\bfc_1,\bfc_2),\dots, \dist(\bfc_1,\bfc_i)=2(w-a) \Big)  \\
&\hspace{0.7in}- \avg \Big( w-\frac{\dist(\bfc_1,\bfc_j)}2 \mid  \dist(\bfc_1,\bfc_2),\dots, \dist(\bfc_1,\bfc_i)=2(w-b) \Big)\Big]\Big|\\
&\le \Big| w-d/2 + \sum_{j=i+1}^{t+1} \frac{(w-d/2)}{N-1-(i-1)}\Big|\\
&= (w-d/2)  \Big(  1 +  \frac{t-i+1}{N-i}\Big)\\
&=(w-d/2)c_i,
\end{align*}

where $c_i =    1 +  \frac{t-i+1}{N-i}.$

\section{Proof of Lemma \ref{lem:identity}}\label{app:three}
\begin{proof}
The proof is straight-forward and uses the following approximation:
$$
\ln x - \ln(x-1) = \frac1x +o\Big(\frac1x\Big).
$$We have,
\begin{align*}
1- \h_q(1-1/s) & = \frac1{\ln q }\Big(1-\frac1s\Big)\Big((\ln q -\ln(q-1)) - (\ln s-\ln(s-1))\Big) \\
&+\frac1{s\ln q }\ln\frac{q}{s}\\
& = \frac1{\ln q }\Big(\frac1q -\frac1{qs} -\frac1s+\frac1{s^2} +\frac1s\ln\frac{q}{s}\Big) - o\Big(\frac1{\ln q}\Big(\frac1s-\frac1q\Big)\Big)\\
& = \frac1{s\ln q }\Big(\frac{s}{q} -1 + \ln\frac{q}{s}\Big)-o\Big(\frac1{s\ln q}\Big).
\end{align*}
\end{proof}


\begin{thebibliography}{99}

\bibitem{AS2000}
N.~Alon, J.~Spencer, \emph{The Probabilistic Method}, Wiley \& Sons, 2000.

\bibitem{AS2010}
G.~Atia, V.~Saligrama, ``Noisy group testing: an information theoretic perspective,''
\emph{Proc. 47th Annual IEEE Allerton Conference on Communication, Control and Computing},
 Monticello IL, pp.~355--362, Sep 30-- Oct 2, 2009.

\bibitem{BMTW1984}
T.~Berger, N.~Mehravari, D.~Towsley, J.~Wolf,
``Random multiple-access communications and group
testing,'' \emph{IEEE Transactions on Communications}, vol.~32, no.~7, pp.~ 769--779, 1984.

\bibitem{BKS1971}
S.~Blumenthal, S.~Kumar, M.~Sobel, ``A symmetric binomial
group-testing with three outcomes,'' \emph{Purdue Symposium Decision Procedures}, 1971.

\bibitem{CHJ2010}
R.~Calderbank, S.~Howard, S.~Jafarpour, ``Construction of a large class of deterministic sensing Matrices that satisfy a statistical isometry property,''
\emph{IEEE Journal of Selected Topics in Signal Processing,} vol.~4, issue~4,  pp.~358--374, April 2010.

\bibitem{C2010}
M.~Cheraghchi, ``Improved constructions for non-adaptive threshold group testing,''
\emph{Proc. 37th International Colloquium on Automata, Languages and Programming (ICALP),}
Lecture Notes in Computer Science 6198, Springer, pp.~552--564,
2010.

\bibitem{CFN1994}
B.~Chor, A.~Fiat, M.~Naor, ``Tracing traitors,'' \emph{Advances in Cryptology, CRYPTO`94}, 
Lecture Notes in Computer Science 839, Springer, Berlin, pp.~257--270, 1994.

\bibitem{D1953}
J.~L.~Doob, \emph{Stochastic Processes,} John Wiley \& Sons, New York, 1953.

\bibitem{DH2000}
D.~Du, F.~Hwang,  \emph{Combinatorial Group Testing and Applications}, World Scientific Publishing, 2000.

\bibitem{DR1982}
A.~Dyachkov, V.~Rykov, ``Bounds on the length of disjunctive codes,'' \emph{Problemy Peredachi Informatsii}, vol.~18,  pp.~7--13,
1982.

\bibitem{DRM2000}
A.~D'yachkov, V.~Rykov, A.~Macula, ``New constructions of
superimposed codes,'' \emph{IEEE Transactions  on Information Theory,} vol.~46, no.~1, 2000. 

\bibitem{DRR1989}
A.~Dyachkov, V.~Rykov, A.~Rashad,`` Superimposed distance codes,'' \emph{Problems of Control and Information
Theory}, vol.~18, no.~4,  pp.~237--250, 1989.

\bibitem{DVTM2002}
A.~Dyachkov, P.~Vilenkin, D.~Torney, A.~Macula, ``Families of finite sets in 
which no intersection of l sets is covered by the union of s other,'' \emph{Journal of Combinatorial Theory, Series A}, vol.~99, no.~2,  pp.~195--218, 2002.


\bibitem{GIS2008}
A.~Gilbert, M.~Iwen, M.~Strauss, ``Group testing and sparse signal recovery,'' \emph{Proc. 42nd Asilomar Conference
on Signals, Systems and Computers}, 2008.

\bibitem{H1972}
F.~Hwang, ``A method for detecting all defective members in a population by group testing,'' \emph{Journal of American Statistical Association}, vol.~67,   pp.~605-608, 1972.


\bibitem{H1984}
F.~Hwang, ``Three Versions of a Group Testing Game,'' \emph{SIAM Journal of  Alg.
Disc. Math.}, vol.~5,   pp.~145-153, June 1984.

\bibitem{HS1987}
F.~Hwang, V.~Sos, ``Non-adaptive hypergeometric group testing,'' \emph{Studia Scient. Math. Hungarica.,}
vol.~22, pp.~257--263, 1987.

\bibitem{INR2010}
P.~Indyk, H.~Ngo, A.~Rudra, ``Efficiently Decodable Non-adaptive Group Testing,'' \emph{Proc. 20th Annual ACM-SIAM Symposium on Discrete Algorithms (SODA 2010),}  Austin, TX, Jan 2010.

\bibitem{KS1964}
W.~Kautz, R.~Singleton, ``Nonrandom binary superimposed codes,'' \emph{IEEE Transaction on Information
Theory}, vol.~10, no.~4, pp.~185--191, 1964.

\bibitem{MP2004}
A.~Macula, L.~Popyack, ``A group testing
method for finding patterns in data,'' \emph{Discrete Applied
Mathematics}, 144(1-2):149--157, 2004.

\bibitem{MRY2004}
A.~Macula, V.~Rykov, S.~Yekhanin, ``Trivial two-stage group testing for complexes using almost disjunct matrices,''
\emph{Discrete Applied Mathematics,}  vol.~137, issue~1, pp.~97--107,  2004.

\bibitem{MS1977}
F.~Macwilliams, N.~Sloane, \emph{The Theory of Error-Correcting Codes}, North-Holland, 1977.

\bibitem{M1978}
M.~Malyutov, ``The separating property of random matrices,'' \emph{Mathematical Notes,} 
vol.~23, no.~1, pp 84--91,  1978.

\bibitem{MB2011}
A.~Mazumdar, A.~Barg, ``Sparse-Recovery Properties of Statistical RIP Matrices,'' \emph{Proc. 49th Allerton Conference
on Communication, Control and Computing}, Monticello, IL, Sept.~28--30, 2011.

\bibitem{Mc1989}
C.~McDiarmid, ``On the method of bounded differences,'' \emph{Surveys in Combinatorics}, London Math. Soc. Lectures Notes 141, Cambridge,  pp. 148--188, 1989.

\bibitem{ND2000}
H.~Ngo, D.~Du, ``A survey on combinatorial group testing algorithms with applications to DNA
library screening,'' \emph{Discrete Mathematical Problems
with Medical Applications}, vol.~55 of DIMACS Series Discrete Mathematics and Theoretical
Computer Science, pp.~171--182, 1999.

\bibitem{NPR2011}
H.~Ngo, E.~Porat, A.~Rudra, ``Efficiently decodable error-correcting list disjunct 
matrices and applications,'' \textit{Proc. 35th Int. Colloquium on Automata, Languages and Programming (ICALP)},
 pp.~557--568, 2011.


\bibitem{PR2008}
  E.~Porat, and A.~Rothschild, ``Explicit Non-Adaptive Combinatorial Group Testing Schemes,''
\textit{Proc. 35th Int. Colloquium on Automata, Languages and Programming (ICALP)},
 pp.~748--759, 2008.

\bibitem{R2006}
R.~Roth, \emph{Introduction to Coding Theory}, Cambridge, 2006.

\bibitem{SSW2001}
J.~N.~Staddon, D.~R.~Stinson, R.~Wei, ``Combinatorial properties of frameproof and traceability codes,''
\emph{IEEE Transaction on Information
Theory}, vol.~47, no.~3, pp.~1042--1049, 2001.

\bibitem{SWJ2000}
D.~R.~Stinson, R.~Wei, L.~Zhu, ``Some new bounds for cover-free families,'' 
\emph{Journal of Combinatorial Theory, Series A}, vol.~90, no.~1, pp.~224--234, 2000.


\bibitem{W1985}
J.~Wolf, ``Born again group testing: multiaccess
communications,'' \emph{IEEE Transaction on Information
Theory}, vol.~31, pp.~185--191, 1985.

\bibitem{Y1998}
S.~Yekhanin, ``Some new constructions of optimal superimposed designs,''
{\em Proc. of International Conference on Algebraic and Combinatorial Coding Theory (ACCT)},
  pp.~232--235, 1998.
 

\bibitem{Z2003}
A.~Zhigljavsky, ``Probabilistic existence theorems in group testing,''
 \emph{Journal of Statistical Planning and Inference,} vol.~115, no.~1, pp.~1 - 43, 2003.

\end{thebibliography}
\end{document}